\newtheorem{definition}{Definition}
\newtheorem{theorem}{Theorem}
\newtheorem{lemma}[theorem]{Lemma}
\newtheorem{claim}[theorem]{Claim}
\newtheorem{conjecture}[theorem]{Conjecture}
\newenvironment{proof}{\noindent{\bf Proof:} \hspace*{1mm}}{
        \hspace*{\fill} $\Box$ }
\newenvironment{proof_of}[1]{\noindent {\bf Proof of #1:}
        \hspace*{1mm}}{\hspace*{\fill} $\Box$ }
\title{On Learning Finite-State Quantum Sources}
\author{Brendan Juba\footnote{%
Supported by a NSF Graduate Research Fellowship.}
\\ MIT CSAIL\\ {\tt bjuba@mit.edu}}
\newcommand{\toffoli}{\wedge_\oplus}
\newcommand{\ket}[1]{|#1\rangle }
\newcommand{\poly}{\mathrm{poly}}
\newcommand{\tP}{\tilde{P}}
\newcommand{\tx}{\tilde{x}}
\newcommand{\bE}{\mathbb{E}}
\newcommand{\calF}{\mathcal{F}}
\newcommand{\calC}{\mathcal{C}}
\newcommand{\calD}{\mathcal{D}}
\newcommand{\calP}{\mathcal{P}}
\newcommand{\bbC}{\mathbb{C}}
\begin{document}
\maketitle

\begin{abstract}
We examine the complexity of learning the distributions produced by finite-state
quantum sources. We show how prior techniques for learning hidden Markov models
can be adapted to the {\em quantum generator} model to find that the analogous
state of affairs holds: information-theoretically, a polynomial number of
samples suffice to approximately identify the distribution, but computationally,
the problem is as hard as learning parities with noise, a notorious open
question in computational learning theory.
\end{abstract}

\section{Introduction}
In recent work, Wiesner and Crutchfield~\cite{wc08} introduced {\em Quantum
Generators} as a formal model of simple quantum mechanical systems. In this
model, a simple quantum mechanical system is observed repeatedly, yielding a
classical stochastic process consisting of the sequence of discrete measurement
outcomes, analogous to how an underlying Markov process yields a sequence of
observations in a hidden Markov model. From this perspective, it is natural to
wonder what can be learned about such a simple quantum mechanical system from
the sequence of measurement outcomes.

In this work, we consider the question of whether or not it is feasible to learn
the distribution on measurement outcomes from a reasonable (polynomially 
bounded) number of observations. We state two theorems on this subject: first,
in Section~\ref{PAC-learn}, we show that it is information-theoretically 
possible to learn the distribution over measurements for binary processes
in polynomially many observations, but we then show in Section~\ref{hardlearn}
that under a standard hardness assumption (Conjecture~\ref{noisyparity}, that it
is computationally infeasible to learn parity functions in the presence of
classification noise) that it is also computationally infeasible to learn the
output distribution of a Quantum Generator (also for a binary alphabet).

\section{Preliminaries}
We begin by recalling the formal definition of Quantum Generators (specialized
to binary observations here) and the models of learning that we will need.

\subsection{The Quantum Generator Model}

Quantum Generators, defined by Wiesner and Crutchfield~\cite{wc08}, are a model
of a simple, repeatedly observed quantum mechanical system. Formally:
\begin{definition}[Quantum Generator]
A {\em $k$-state Quantum Generator} is given by a four-tuple, 
$(\ket{\psi_0},U,M,\Sigma)$ where the {\em initial state} 
$\ket{\psi_0}\in\bbC^k$ has $\ell_2$-norm 1, $U$ is a unitary transformation on
$\bbC^k$, $\Sigma$ is a finite set of {\em measurement outcomes}, and $M$ is a 
{\em projective measurement operator}, i.e., there is a partition of 
$\{1,\ldots,k\}$ into $|\Sigma|$ sets such that associated with each $\sigma\in
\Sigma$, there is a projection $M_\sigma$ onto the associated coordinates.

A Quantum Generator produces a probability distribution in the following way:
given $\ket{\psi_t}$, for each $\sigma\in\Sigma$, $x_{t+1}=\sigma$ and 
$\ket{\psi_{t+1}}=\frac{M_\sigma U\ket{\psi_t}}{||M_\sigma U\ket{\psi_t}||_2}$
with probability $||M_\sigma U\ket{\psi_t}||_2^2$. Thus, in particular, the
probability of the $n$-symbol output $x_1,\ldots,x_n\in\Sigma^n$ is given by
$||M_{x_n} U\cdots M_{x_1} U\ket{\psi_0}||_2^2$.
\end{definition}
In this work, {\em we will only consider measurements with two output symbols}.
Thus, in general (if the system has more than two basis states), we only
consider degenerate measurements. This is, of course, with some loss in 
generality, but it also means that the hardness result in Theorem~\ref
{comp-hard} holds even for a highly restricted class.

From a theoretical perspective, it is also natural to wonder if it is necessary
to link the output distribution and measurement of the quantum system -- and
certainly, proposals for formal models that do not identify these two concepts
exist in the literature~\cite{fw01,gudder99} -- but in their work, Wiesner and
Crutchfield stress that the resulting (alternative) models do not capture simple
physical systems. Since we wish to strive for relevance in this case, we adopt
the model of Wiesner and Crutchfield here. Again, we also stress that our
negative result holds even for this more restricted class of (physically 
relevant) processes.

We also remark that we allow our Quantum Generators to start in an arbitrary
state and in the model of learning distributions that we consider, we assume
that it is possible to take many independent samples from this distribution.
This is arguably unrealistic, but we note that the hardness result is likely to
be more relevant to practice, where the construction we use in our
hardness result turns out to have two desirable properties: first, it starts in
a basis state (i.e., of the form $e_i$), and second, the $mn$-symbol 
distribution of the Quantum Generator is distributed identically to $m$
independent copies of the $n$ symbol distribution, so we also have hardness for
learning from a single, long sample as well. For more details, consult
Appendix~\ref{hardness-proof}.
\subsection{Models of learning distributions}
In contrast to the classic PAC model, and in contrast to the approach taken
by Abe and Warmuth in their treatment of probabilistic automata~\cite{aw92},
our positive and negative results will all be given for the
representation-independent ``improper PAC'' distribution-learning model
introduced by Kearns et al.~\cite{kmrrss94}. Specifically, we use their
notion of learning with an evaluator:
\begin{definition}[Distribution learning under the KL-divergence]
We say that a class of distributions $\calD$ is {\em learnable under the
KL-divergence} in $m$ samples (time complexity $t$) if there is an algorithm 
that, on input $n$, $\varepsilon$, $\delta$, and $x_1,\ldots,x_m\in\{0,1\}^n$ 
sampled from $D_n$ for $D=\{D_n\}_n$ an ensemble from $\calD$, outputs an
``evaluator'' circuit $E:\{0,1\}^n\rightarrow [0,1]$ (within $t$ steps) such 
that the distribution on $\{0,1\}^n$ computed by $E$ satisfies $KL(D_n||E)<
\varepsilon$ with probability $1-\delta$.
\end{definition}
We will comment explicitly on the time efficiency of the learning algorithm and
number of samples $m$, as appropriate. In particular, if $m$ is an appropriate
polynomial (in $n$, $\frac{1}{\varepsilon}$, $\log\frac{1}{\delta}$, and in our
case also $k$, the number of states), this corresponds to improper PAC-learning,
and if $t$ is an appropriate polynomial (in the same parameters) then learning
is said to be {\em efficient}.

We also use a hardness of learning assumption, which depends on
the definition of learning in the presence of noise~\cite{al88}:
\begin{definition}[Learning in the presence of noise]
We say that a class of boolean functions $\calC$ is {\em efficiently learnable 
under the uniform distribution with noise rate $\eta$} if there is an algorithm 
that, on input $n$, $\varepsilon$, $\delta$, and $\eta$, when given $x_1,\ldots,
x_m$ uniformly chosen from $\{0,1\}^n$ and $b_1,\ldots,b_m$ where each 
$b_i=f(x_i)$ for a fixed $f\in\calC$, with probability $1-\eta$ independently,
with probability $1-\delta$ outputs the representation of a function 
$f'$ such that $\Pr_{x\in\{0,1\}^n}[f(x)\neq f'(x)]<\varepsilon$, 
in time polynomial in $n,$ $\frac{1}{\varepsilon}$, and $\log\frac{1}{\delta}$.
\end{definition}

\section{Improper PAC-learnability}\label{PAC-learn}
In this section, we adapt the approach used by Abe and Warmuth~\cite{aw92}
to show that (classical) probabilistic automata are PAC-learnable to show that
the distributions produced by Quantum Generators are improperly PAC-learnable
under the KL-divergence. 

Following Kitaev, we employ the set of gates
$\{I,S,K,\bigoplus,\toffoli\}$ where $I$ is the identity gate,
$S=\frac{1+i}{2}\left(\begin{array}{cc}1&1\\1&-1\end{array}\right)$
is a scaled Hadamard gate,
$K=\left(\begin{array}{cc}1&0\\0&i\end{array}\right)$ is a phase shift,
$\bigoplus(\ket{a,b})=\ket{a,a\oplus b}$,
and $\toffoli(\ket{a,b,c})=\ket{a,b,(a\wedge b)\oplus c}$ is a Toffoli gate.
We first recall the Solovay-Kitaev Theorem~\cite{kitaev}
\begin{theorem}[Solovay-Kitaev]\label{skt}
For any $\delta>0$ and $n$-qubit unitary $U$, there is a
${O(2^{2n}(n+\poly\log\frac{1}{\delta}))}$ gate 
$\ell_2$ $\delta$-approximation to $U$ in our set of gates.
\end{theorem}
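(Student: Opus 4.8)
The plan is to follow the standard two-part structure of the Solovay--Kitaev argument: first prove the statement for a single qubit (targets in $SU(2)$, up to global phase), then bootstrap to $n$ qubits via a two-level decomposition.

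\textbf{Step 1: the gate set densely generates, and is inverse-closed.} I would first invoke Kitaev's analysis that $\{S,K,\toffoli\}$ generates a subgroup that is dense in $SU(2^n)$ (modulo global phase) for every $n$; note that $\{S,K\}$ alone generates only the finite single-qubit Clifford group, so it is the Toffoli gate that supplies density. Since $S$ and $K$ have finite order and $\toffoli$ is its own inverse, the generated group is closed under inverses, which is the hypothesis that the recursion below requires. From density one gets, by brute-force enumeration of all gate words of some absolute constant length, an $\varepsilon_0$-net of $SU(2)$ for an absolute constant $\varepsilon_0$; this is the ``level-$0$'' approximation scheme, using $O(1)$ gates per target.

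\textbf{Step 2: single-qubit Solovay--Kitaev.} The core is the recursive commutator construction. To approximate a target $U\in SU(2)$ at level $j$, first approximate it at level $j-1$ by $U_{j-1}$, so that the residual $\Delta = UU_{j-1}^{\dagger}$ lies within $\varepsilon_{j-1}$ of $I$. The \emph{balanced group-commutator lemma} states that any $\Delta$ this close to $I$ can be written $\Delta = VWV^{\dagger}W^{\dagger}$ with $\|V-I\|,\|W-I\| = O(\sqrt{\varepsilon_{j-1}})$; recursively approximate $V$ and $W$ at level $j-1$ by $\tilde V,\tilde W$, and output $U_j = \tilde V\tilde W\tilde V^{\dagger}\tilde W^{\dagger}U_{j-1}$. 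Estimating the commutator of near-identity unitaries perturbed by $\varepsilon_{j-1}$ whose factors are $O(\sqrt{\varepsilon_{j-1}})$ from $I$ shows the new error obeys $\varepsilon_j = O(\varepsilon_{j-1}^{3/2})$, while the gate count obeys $g_j = 5g_{j-1}+O(1)$. Solving these recurrences, $O(\log\log\frac{1}{\delta})$ levels suffice to drive the error below $\delta$, so the gate count is $5^{O(\log\log(1/\delta))} = \poly\log\frac{1}{\delta}$.

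\textbf{Step 3: from one qubit to $n$ qubits.} Decompose the $2^n\times 2^n$ unitary $U$ into a product of $O(4^n)=O(2^{2n})$ two-level unitaries, each a $2\times2$ unitary embedded in the span of two computational basis strings (the usual Gaussian-elimination argument on the columns of $U$). For each two-level piece, use a Gray-code sequence of CNOTs to make the two relevant strings differ in a single coordinate, apply an $SU(2)$ rotation on that coordinate controlled on the remaining $n-1$ coordinates (a multiply-controlled single-qubit gate, built from $O(n)$ Toffoli gates plus one ancilla together with the single-qubit rotation), then undo the routing. The CNOTs and Toffolis are exact gates of the set, so all approximation error comes from the $O(2^{2n})$ single-qubit rotations; giving each precision $\delta/2^{2n}$ via Step 2 costs $\poly\log(2^{2n}/\delta) = O(n + \poly\log\frac{1}{\delta})$ gates, plus $O(n)$ for the addressing. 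Summing over the $O(2^{2n})$ pieces gives the stated bound $O(2^{2n}(n+\poly\log\frac{1}{\delta}))$.

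\textbf{Main obstacle.} The technical heart is the balanced group-commutator lemma in Step 2 --- showing that a unitary $\varepsilon$-close to $I$ factors as a commutator whose two factors are only $O(\sqrt{\varepsilon})$ from $I$ (cleanest via the Lie algebra of $SU(2)$, controlling the Baker--Campbell--Hausdorff / commutator remainder terms) --- together with verifying that the induced error map $\varepsilon\mapsto O(\varepsilon^{3/2})$ is a genuine contraction starting from the level-$0$ net, so the recursion converges doubly exponentially rather than diverging. The two-level decomposition and the Gray-code addressing in Step 3 are then routine bookkeeping.
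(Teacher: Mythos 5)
You should first be aware that the paper does not prove Theorem~\ref{skt} at all: it is quoted from Kitaev~\cite{kitaev} and used as a black box inside Claim~\ref{goodnet}, so there is no internal proof to compare against. Judged against the standard literature argument (Kitaev--Shen--Vyalyi / Dawson--Nielsen), your sketch reproduces the right structure: a constant-dimension commutator recursion giving $\poly\log\frac{1}{\delta}$ gates, composed with a two-level decomposition of the $2^n\times 2^n$ unitary into $O(2^{2n})$ pieces with Gray-code and Toffoli addressing.

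The one genuine gap is in Step 2 as written. For this particular gate set the single-qubit circuits are words in $S$ and $K$ only, and, as you yourself observe, these generate a finite group (the one-qubit Clifford group up to phase); the same holds on two qubits, where $S$, $K$ and $\bigoplus$ generate only the finite two-qubit Clifford group. Density first appears for circuits on three or more qubits, where the Toffoli gate enters. Consequently there is no dense, inverse-closed subgroup of $SU(2)$ in which to run your ``single-qubit Solovay--Kitaev'' recursion, and the level-$0$ net of $SU(2)$ promised in Step 1 does not exist from one-qubit words. The standard repair is to run the whole recursion in $SU(2^c)$ for a small constant $c\ge 3$: approximate $V\otimes I$ on the addressed qubit together with $c-1$ ancillas (restored to $\ket{0}$), using density of the group generated by the full gate set on $c$ qubits, and then substitute these constant-width blocks into your Step 3 in place of bare single-qubit rotations. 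Since $c$ is constant this preserves the $\poly\log\frac{1}{\delta}$ count and leaves the rest of your argument intact, but the proof must say this, because as stated the hypothesis of the one-qubit recursion is false for this gate set. A lesser quibble: allocating precision $\delta/2^{2n}$ per two-level piece gives a per-piece cost $\poly\log(2^{2n}/\delta)=O(\poly(n)+\poly\log\frac{1}{\delta})$, not $O(n+\poly\log\frac{1}{\delta})$, since the Solovay--Kitaev exponent exceeds $1$; this only affects the cosmetic form of the bound and is harmless for the paper's application (Claim~\ref{goodnet} needs only a $2^{\poly(k,n,\log\frac{1}{\epsilon})}$ net size), but the linear-in-$n$ additive form does not follow from the accounting you gave.
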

In particular, since a $k$-state quantum generator has a unitary with a
$\log k$-qubit representation, we find:
\begin{claim}\label{goodnet}
There is an $\epsilon$-net under the $\ell_\infty$ distance 
on the $n$-symbol output distributions of $k$-state Quantum Generators of size 
$2^{\poly(k, n, \log\frac{1}{\epsilon})}$
\end{claim}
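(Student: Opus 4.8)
The plan is to build the net by discretizing each continuous ingredient of a $k$-state Quantum Generator---the initial state $\ket{\psi_0}$, the unitary $U$, and (essentially for free) the two-outcome measurement $M$---and then showing that the map from these parameters to the $n$-symbol output distribution is Lipschitz, so that a sufficiently fine discretization yields an $\epsilon$-net under the $\ell_\infty$ distance. The key quantitative inputs are the Solovay--Kitaev Theorem (Theorem~\ref{skt}) for the unitary and a volumetric covering bound for the sphere of initial states.

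For the unitary, the observation is that a $k$-state generator's unitary acts on $\lceil\log k\rceil$ qubits, so by Theorem~\ref{skt} there is, for any $\delta>0$, a circuit of $L=O(k^2(\log k+\poly\log\frac1\delta))$ gates from the fixed set $\{I,S,K,\bigoplus,\toffoli\}$ that $\ell_2$-$\delta$-approximates it. Each of these gates is unitary and acts on at most three of the $\lceil\log k\rceil$ qubits, so the number of distinct circuits of length $L$ is at most $\bigl(O((\log k)^3)\bigr)^{L}=2^{O(L\log\log k)}$; call this finite set of unitaries $\mathcal{N}_U$. For the initial state, take a standard $\epsilon'$-net $\mathcal{N}_0$ of the unit sphere of $\bbC^k\cong\mathbb{R}^{2k}$, of size $2^{O(k\log\frac1{\epsilon'})}$. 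Since $M$ is a partition of $\{1,\dots,k\}$ into two blocks, there are at most $2^k$ choices, and we simply enumerate them. The candidate net is then the set of $n$-symbol output distributions of all generators $(\ket{\tilde\psi_0},\tilde U,M,\Sigma)$ with $\ket{\tilde\psi_0}\in\mathcal{N}_0$, $\tilde U\in\mathcal{N}_U$, and $M$ a two-block partition; its cardinality is at most $2^k\cdot|\mathcal{N}_U|\cdot|\mathcal{N}_0|=2^{\poly(k,\log\frac1\delta,\log\frac1{\epsilon'})}$.

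To show this is an $\epsilon$-net, fix a true generator $(\ket{\psi_0},U,M,\Sigma)$, choose $\tilde U\in\mathcal{N}_U$ with $\|\tilde U-U\|\le\delta$ in operator norm, and choose $\ket{\tilde\psi_0}\in\mathcal{N}_0$ with $\|\ket{\tilde\psi_0}-\ket{\psi_0}\|\le\epsilon'$. Fix a string $x=x_1\cdots x_n$ and set $A=M_{x_n}\tilde U\cdots M_{x_1}\tilde U$ and $B=M_{x_n}U\cdots M_{x_1}U$. A telescoping (hybrid) argument---replacing the copies of $U$ by $\tilde U$ one at a time, using that every $M_\sigma$ is a contraction and every $U,\tilde U$ has operator norm one, so that each replacement changes the product by at most $\delta$---gives $\|A-B\|\le n\delta$, and since $\|A\|\le1$ and $\|B\|\le1$ we get $\|A\ket{\tilde\psi_0}-B\ket{\psi_0}\|\le\|A-B\|+\|B(\ket{\tilde\psi_0}-\ket{\psi_0})\|\le n\delta+\epsilon'$. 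As $\|A\ket{\tilde\psi_0}\|,\|B\ket{\psi_0}\|\le1$,
$$\bigl|\,\|A\ket{\tilde\psi_0}\|^2-\|B\ket{\psi_0}\|^2\,\bigr|\le\bigl(\|A\ket{\tilde\psi_0}\|+\|B\ket{\psi_0}\|\bigr)\,\bigl|\,\|A\ket{\tilde\psi_0}\|-\|B\ket{\psi_0}\|\,\bigr|\le 2(n\delta+\epsilon'),$$
so the output distribution of $(\ket{\tilde\psi_0},\tilde U,M,\Sigma)$ differs from that of $(\ket{\psi_0},U,M,\Sigma)$ by at most $2(n\delta+\epsilon')$ on every string. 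Taking $\delta=\Theta(\epsilon/n)$ and $\epsilon'=\Theta(\epsilon)$ makes this at most $\epsilon$, and since $\log\frac1\delta$ and $\log\frac1{\epsilon'}$ are then $O(\log\frac{n}{\epsilon})$, the net size remains $2^{\poly(k,n,\log\frac1\epsilon)}$.

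The step I expect to require the most care is the error propagation: one must get the linear-in-$n$ blowup from the hybrid argument right and, crucially, use that the measurement projections are contractions, so that composing $n$ of the slightly-perturbed operators neither amplifies the per-step error nor inflates the norms past $1$ (which is also what lets us pass from the $\ell_2$ distance between the unnormalized states to an $\ell_\infty$ bound on the probabilities). Everything else is routine accounting: the Solovay--Kitaev gate count with $\lceil\log k\rceil$ qubits, the count of length-$L$ circuits over a constant-size gate set on $\lceil\log k\rceil$ qubits, the volumetric sphere-covering bound, and the $2^k$ bound on the number of two-outcome projective measurements.
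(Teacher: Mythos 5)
Your proposal is correct and takes essentially the same route as the paper: discretize the unitary via Solovay--Kitaev on $\lceil\log k\rceil$ qubits, discretize the start state (you use a generic sphere covering where the paper uses an explicit bit-encoding of the entries, an immaterial difference), enumerate the $2^k$ two-outcome measurements, and propagate the error linearly in $n$ using that the $M_\sigma$ are contractions. Your hybrid argument just spells out the error-propagation step that the paper states tersely, and your parameter choices $\delta=\Theta(\epsilon/n)$, $\epsilon'=\Theta(\epsilon)$ match the paper's $\epsilon_0=\epsilon/(n+2)$.
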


The key of Abe and Warmuth's analysis was that for any distributions $P$ and 
$Q$, the KL-divergences of the empirical distributions $\hat{P}_n$ from $Q_n$, 
$KL(\hat{P}_n||Q_n)$ converge to $KL(P_n||Q_n)$ (essentially by Hoeffding's
inequality) where we can calculate the former quantity for a given distribution
$Q$ from our $\epsilon$-net. At this point, the learning algorithm is
essentially obvious; the only problem is that the KL-divergence is infinite for 
strings outside the support of a distribution from our $\epsilon$-net, which
would prevent the use of the concentration result. We avoid this by perturbing
the distributions slightly: in the distribution over $n$-symbol samples, we 
fix the minimum probability that any symbol is output on any step to (roughly)
$\varepsilon/n$ (altering the remaining probabilities accordingly).
It is easy to see that this guarantees an upper bound on the KL-divergence
(between our modified distribution and {\em any} distribution over $n$ symbol
strings) of $n\log\frac{n}{\varepsilon}$. Taking (again, roughly) 
$\epsilon=(\varepsilon/2n)^{2n}$, we can show that for the distribution $\tilde
{D}$ we obtain from our perturbed approximation to a distribution $D$ obtained
from a Quantum Generator, the total KL-divergence from $D$ is at most
$\varepsilon$. Note that the elements of the $\epsilon$-net still have
representations of size polynomial in $n$ since the dependence on $\epsilon$ was
only polylogarithmic. Thus, we find:

\begin{theorem}\label{improper-pac}
The class of $k$-state Quantum Generators is learnable under the KL-divergence 
with sample complexity $\poly(n,k,\frac{1}{\varepsilon},\log\frac{1}{\delta})$.
\end{theorem}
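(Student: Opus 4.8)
The plan is to follow the standard "cover-and-select" recipe for improper distribution learning, using the $\epsilon$-net from Claim~\ref{goodnet} as the hypothesis class and a maximum-likelihood-style selection procedure. First I would fix the perturbation: given a target $\varepsilon$ and length $n$, set $\epsilon = (\varepsilon/2n)^{2n}$ and build the $\ell_\infty$ $\epsilon$-net $\calN$ of $n$-symbol output distributions of $k$-state Quantum Generators, which by Claim~\ref{goodnet} has size $2^{\poly(k,n,\log(1/\varepsilon))}$, i.e.\ $\log|\calN| = \poly(k,n,\log\frac{1}{\varepsilon})$. For each $Q \in \calN$ I would form the perturbed distribution $\tilde{Q}$ that, on every step, outputs each of the two symbols with probability at least (roughly) $\varepsilon/n$, renormalizing the conditional probabilities; this makes $\tilde{Q}$ everywhere positive, so $\log(1/\tilde{Q}(x)) \le n\log\frac{n}{\varepsilon}$ for every $x \in \{0,1\}^n$, and hence $KL(R \| \tilde{Q}) \le n\log\frac{n}{\varepsilon}$ for \emph{any} distribution $R$ on $\{0,1\}^n$ — this is the bounded-range property that makes concentration applicable.

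Next I would argue existence of a good hypothesis in the perturbed net. Let $D_n$ be the true $n$-symbol distribution of the unknown $k$-state generator. Since $\calN$ is an $\epsilon$-net, some $Q^\ast \in \calN$ is $\ell_\infty$-close to $D_n$; a short calculation (using $\epsilon = (\varepsilon/2n)^{2n}$ together with the fact that each string probability is a product of at most $n$ conditional probabilities each bounded below by the perturbation floor) shows $KL(D_n \| \tilde{Q}^\ast) \le \varepsilon/2$, say — this is exactly the computation gestured at in the paragraph before the theorem. Then I would invoke the empirical-KL concentration: for a fixed hypothesis $\tilde{Q}$, the empirical cross-entropy $\frac{1}{m}\sum_i \log\frac{1}{\tilde{Q}(x_i)}$ concentrates around its mean $\bE_{x\sim D_n}[\log\frac{1}{\tilde{Q}(x)}]$ by Hoeffding, since each term lies in $[0, n\log\frac{n}{\varepsilon}]$; taking $m = \poly(n,k,\frac{1}{\varepsilon},\log\frac{1}{\delta})$ samples and a union bound over all $|\calN|$ hypotheses (which costs only the $\log|\calN| = \poly(k,n,\log\frac{1}{\varepsilon})$ factor in the sample complexity) gives, with probability $1-\delta$, simultaneous additive accuracy $\varepsilon/4$ for every $\tilde{Q} \in \calN$. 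The learner outputs $E = \tilde{Q}$ minimizing the empirical cross-entropy.

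Finally I would chain the inequalities. On the good event, for the selected $E$ we have empirical cross-entropy of $E$ at most that of $\tilde{Q}^\ast$, which is within $\varepsilon/4$ of $\bE_{D_n}[\log\frac{1}{\tilde{Q}^\ast}]$, which in turn is within $\varepsilon/4$ of $\bE_{D_n}[\log\frac{1}{E}]$; rearranging and using $\bE_{D_n}[\log\frac{1}{E}] = KL(D_n\|E) + H(D_n)$ and likewise for $\tilde{Q}^\ast$, the entropy terms cancel and we get $KL(D_n\|E) \le KL(D_n\|\tilde{Q}^\ast) + \varepsilon/2 \le \varepsilon$. The evaluator circuit for $E$ is polynomial-size because the $\epsilon$-net elements have $\poly(n)$-size representations (the $\epsilon$-dependence being only polylogarithmic, via Solovay--Kitaev) and evaluating $||M_{x_n}U\cdots M_{x_1}U\ket{\psi_0}||_2^2$ plus the perturbation is a direct $\poly(k,n)$ computation. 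I expect the main obstacle to be the perturbation bookkeeping: one must verify that flooring the per-step symbol probabilities at $\varepsilon/n$ and renormalizing does not blow up the KL-divergence $KL(D_n\|\tilde{Q}^\ast)$ beyond $O(\varepsilon)$ while simultaneously keeping the cross-entropy range bounded by $n\log\frac{n}{\varepsilon}$ — the two requirements pull the perturbation size in opposite directions, and getting $\epsilon = (\varepsilon/2n)^{2n}$ to thread both is the delicate part; the concentration and union-bound steps are then routine.
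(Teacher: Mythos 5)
Your proposal is correct and follows essentially the same route as the paper's own proof: the Solovay--Kitaev-based $\epsilon$-net of Claim~\ref{goodnet}, per-step perturbation floored at roughly $\varepsilon/n$ to bound the log-loss by $n\log\frac{n}{\varepsilon}$ (and hence the KL-divergence of the perturbed near-neighbor by $O(\varepsilon)$ with the $(\varepsilon/2n)^{2n}$-scale net), uniform convergence via Hoeffding plus a union bound over the finite net (the paper cites the equivalent Pollard lemma), selection by minimum empirical log-loss, and the same three-inequality chaining with cancellation of $H(D_n)$. The only differences are cosmetic bookkeeping in the perturbation constants, which the paper handles exactly as you anticipate.
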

The full proof is given in Appendix~\ref{PAC-proof}.

\section{Computational hardness of learning}\label{hardlearn}
We now show the computational hardness of learning the output distributions
of Quantum Generators, under the assumption that learning noisy
parity functions is hard. More specifically, we say that a function
$f:\{0,1\}^n\rightarrow\{0,1\}$ is a {\em parity function} if there is some 
$S\subset\{1,\ldots,n\}$ such that $f(x)=\bigoplus_{i\in S}x_i$, and we
assume that it is hard to identify the set $S$ when we are given random
examples of $f$ with $f(x)$ negated with some probability $\eta$. 
Formally, the assumption is:

\begin{conjecture}[Noisy Parity Learning]\label{noisyparity}
There is a constant $\eta\in (0,1/2)$ such that no algorithm learns the class
of parity functions with noise rate $\eta$ under the uniform distribution in
time polynomial in $n$, $\frac{1}{\epsilon}$, and $\frac{1}{\delta}$.
\end{conjecture}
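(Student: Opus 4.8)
The final statement is Conjecture~\ref{noisyparity}, an \emph{assumption} rather than a theorem: it asserts that learning parity functions under classification noise cannot be done in polynomial time. As an unconditional statement this is a notorious open problem in computational learning theory, and I would not attempt to prove it; indeed, a proof would resolve long-standing questions (it would, for instance, separate certain average-case and worst-case complexity assumptions and is tightly connected to the presumed hardness of Learning Parity with Noise underlying several cryptographic constructions). So the honest ``proof'' here is to state clearly that this is a hypothesis we adopt, and to marshal the evidence that makes it a reasonable one.

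Concretely, the plan is as follows. First, I would recall the equivalence, essentially folklore and made precise by Blum, Kalai, and Wasserman, between learning noisy parities under the uniform distribution and the \emph{Learning Parity with Noise} (LPN) problem in its sample-complexity form; this situates Conjecture~\ref{noisyparity} within a well-studied landscape. Second, I would survey the state of the art: the best known algorithm, due to Blum, Kalai, and Wasserman, runs in time $2^{O(n/\log n)}$ — subexponential but far from polynomial — and despite considerable effort no polynomial-time algorithm is known even for any fixed constant noise rate $\eta \in (0,1/2)$. I would also note the lower-bound evidence: noisy parity learning is hard in the statistical query model (Kearns), which rules out the large class of SQ-based techniques, and it is known to be at least as hard as several other problems believed intractable (e.g., decoding random linear codes, certain lattice problems via reductions of Regev and others). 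Third, I would point out that the conjecture as phrased quantifies existentially over $\eta$ and only asks for polynomial time in $n$, $1/\epsilon$, $1/\delta$, which is a comparatively weak requirement, so the assumption is if anything conservative relative to the stronger cryptographic LPN assumptions already in wide use.

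The ``main obstacle'' framing does not quite apply, since there is no proof to obstruct; the real point to make carefully is a \emph{definitional} one. I would verify that the version of noisy parity learning in Conjecture~\ref{noisyparity} — uniform marginals on $\{0,1\}^n$, independent label noise at rate $\eta$, output a hypothesis $f'$ with $\Pr_x[f(x)\neq f'(x)] < \epsilon$ in time $\poly(n,1/\epsilon,1/\delta)$ — matches the form in which the hardness is conventionally stated and in which it will be \emph{used} in Section~\ref{hardlearn}. In particular I would note that for parity the proper/improper distinction and the choice of $\epsilon$ are essentially immaterial (any $f'$ agreeing with a parity on more than a $1/2+\epsilon$ fraction of inputs, for $\epsilon$ bounded below by a constant, determines the parity exactly by a Fourier/Goldreich--Levin argument), so the conjecture is robust to these formulation choices. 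Having pinned down the statement and recorded the supporting evidence, there is nothing further to prove: Conjecture~\ref{noisyparity} stands as the hypothesis from which the hardness result of the next section will be derived.
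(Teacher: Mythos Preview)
Your proposal is correct and matches the paper's treatment: Conjecture~\ref{noisyparity} is stated as an assumption, not proved, and the paper likewise only marshals supporting evidence (statistical-query lower bounds, the Blum--Kalai--Wasserman $2^{O(n/\log n)}$ algorithm, the Feldman et al.\ reductions, and the connection to decoding random linear codes) rather than attempting a proof. Your write-up is somewhat more expansive in the evidence it cites, but the approach is the same.
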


These functions are known not to be learnable in the restricted {\em statistical
query model}~\cite{kearns98, bfjkmr94}, which captures most known algorithms for
efficient learning in the presence of classification noise, although the best
known algorithm for the problem, due to Blum, Kalai and Wasserman~\cite{bkw03}
efficiently learns parities up to size $O(\log n\log\log n)$, which is beyond
what can be learned in the statistical query model. (For parities of $\Theta(n)$
bits, however, the algorithm requires $2^{\Omega(n/\log n)}$ samples.) 
Feldman et al.~\cite{fgkp06} recently showed that many other problems not known 
to be learnable in the presence of classification noise reduce to the problem of
learning noisy parities, establishing its central place in the classification
noise model. Moreover, this problem is related to the long-standing open problem
of decoding random linear codes~\cite{bfkl93}, and worse still, Feldman et al.
show that learning parities with random noise is as hard as learning parities in
the agnostic learning (adversarial noise) model~\cite{KSS94}. Thus, in any case,
it represents a serious barrier to the current state of the art, and any
algorithm for our problems of interest would represent a major breakthrough on
numerous fronts.

The result proceeds, simply enough, by showing that a Quantum Generator of
modest size (linear in $n$) can produce exactly the distribution of labeled
examples of a parity function with $\eta$ noise, where learning the distribution
of the parity function is sufficient to learn the parity. The construction is a
modification of the analogous constructions for probabilistic automata and
hidden Markov models given by Kearns et al. and Mossel and Roch, 
respectively~\cite{kmrrss94, mr06}. Our construction is illustrated in 
Figure~\ref{fig-qg-parity}. The result is:

\begin{figure}
\centering
\includegraphics[width=5in]{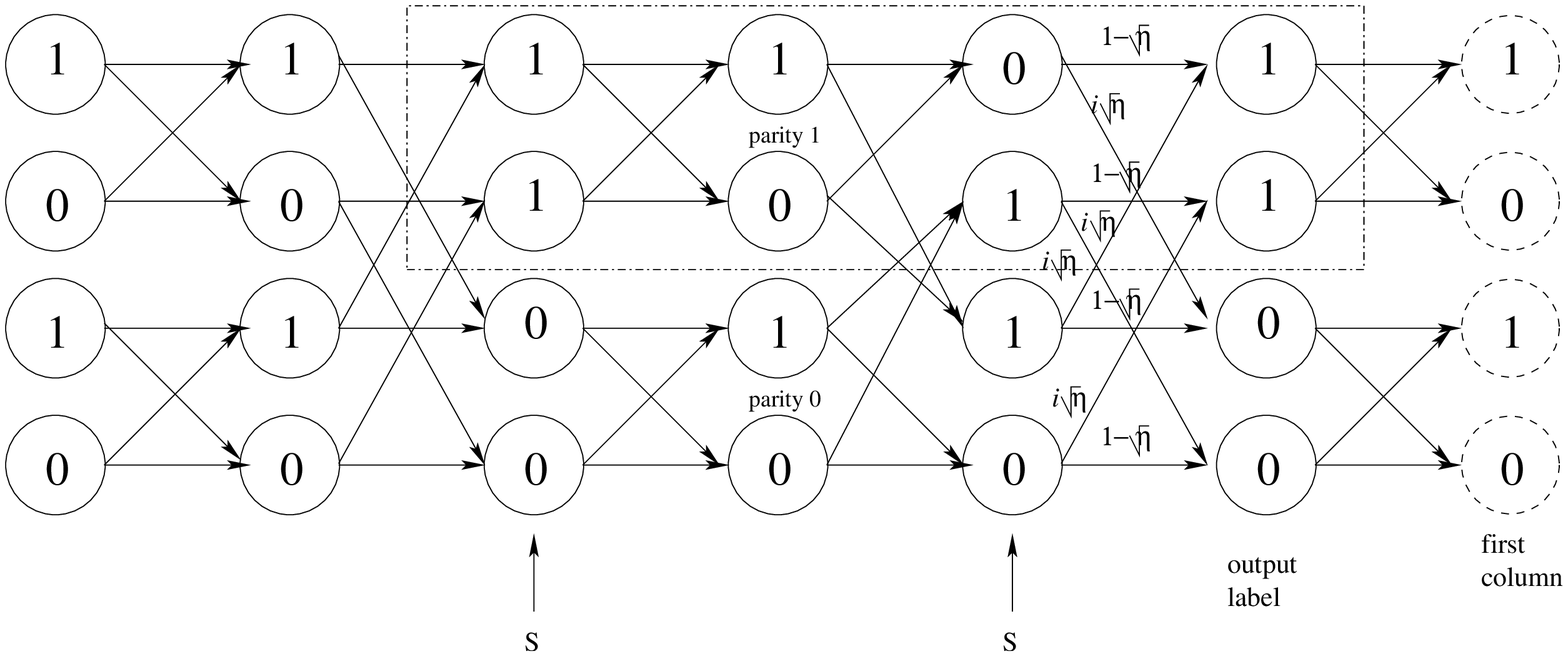}
\caption{A $4(n+1)$-state QG generating a noisy parity of $S=\{3,5\}$ for $n=5$.
Circles correspond to states with labels indicating which partition they belong 
to under the measurement operator; unlabeled transitions come in pairs with 
weights $1/\sqrt{2}$ and $i/\sqrt{2}$.}
\label{fig-qg-parity}
\end{figure}

\begin{theorem}\label{comp-hard}
Assuming the Noisy Parity Learning Conjecture, no algorithm can learn the 
$n$-bit output distribution of a $k$-state Quantum Generator under the 
KL-divergence in time polynomial in $n$, $k$, $\frac{1}{\varepsilon}$, and
$\log\frac{1}{\delta}$.
\end{theorem}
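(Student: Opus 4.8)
The plan is to prove Theorem~\ref{comp-hard} by reduction: from any efficient learner for the output distributions of $k$-state Quantum Generators, build an efficient learner for parity functions under the uniform distribution with noise rate $\eta$, for \emph{every} constant $\eta\in(0,1/2)$, contradicting Conjecture~\ref{noisyparity}. The core is a construction that, given a parity $f(x)=\bigoplus_{i\in S}x_i$ over $\{0,1\}^{n-1}$ and a precision $\delta_0>0$, outputs the description of a Quantum Generator $G_S$ with $O(n)+\poly\log\tfrac1{\delta_0}$ states whose $n$-symbol output distribution is within total variation $\delta_0$ of the noisy-example distribution $D^\star$ on $(x_1,\dots,x_{n-1},\ell)$, where $x$ is uniform and $\ell=f(x)$ with probability $1-\eta$ and $\ell=1\oplus f(x)$ otherwise. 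The generator, sketched in Figure~\ref{fig-qg-parity}, runs through $n$ ``columns'' of $O(1)$ states each; it maintains a register holding the running parity $\bigoplus_{i\le t,\,i\in S}x_i$, updated between columns by a controlled-NOT (the $\bigoplus$ gate) conditioned on the bit just emitted. Each bit $x_t$ comes from a balanced measurement, which is the source of the paired $1/\sqrt2,\ i/\sqrt2$ transition weights in the figure, and the final column applies a fixed one-qubit rotation $R_\eta$ with $\sin^2 R_\eta=\eta$ to the parity register before measuring it, so the ideal construction flips the emitted label with probability exactly $\eta$. Since $R_\eta$ need not lie in our gate set, we invoke the Solovay--Kitaev theorem (Theorem~\ref{skt}) to replace it by an $\ell_2$-$\delta_0$-approximation using $\poly\log\tfrac1{\delta_0}$ gates; this perturbs the output distribution by at most $\delta_0$ in total variation and costs only $\poly\log\tfrac1{\delta_0}$ extra states, so $k$ stays polynomial.

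Given the construction, the reduction is routine. Suppose $A$ learns the $n$-symbol output distribution of $k$-state Quantum Generators under the KL-divergence in time $\poly(n,k,\tfrac1{\varepsilon'},\log\tfrac1{\delta})$ with sample complexity $m$. To learn a noisy parity over $\{0,1\}^{n-1}$ to accuracy $\varepsilon$ and confidence $1-\delta$: take $\varepsilon'=\Theta((1-2\eta)^2\varepsilon^2)$, take $\delta_0$ a sufficiently small inverse polynomial that $m\delta_0=O((1-2\eta)\varepsilon)$, run $A$ with confidence parameter $\delta/2$ and state bound $k=O(n)+\poly\log\tfrac1{\delta_0}$, and feed it the given noisy examples as if they were samples of $G_S$. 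This is legitimate because the $m$-tuple of noisy examples differs in total variation by at most $m\delta_0$ from a genuine $G_S$ sample-tuple, which costs at most that much in $A$'s success probability. With probability at least $1-\delta$, $A$ returns an evaluator circuit $E$ with $KL(D_{G_S}\|E)<\varepsilon'$; note $KL$ is finite here because every string has positive probability under $D_{G_S}$, so $A$'s guarantee is not vacuous.

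It then remains to read a hypothesis for $f$ out of $E$. By Pinsker's inequality $\|D_{G_S}-E\|_{TV}<\sqrt{\varepsilon'/2}$, hence $\|D^\star-E\|_{TV}\le\delta_0+\sqrt{\varepsilon'/2}$. Define the hypothesis circuit $f'(x):=\mathbf 1[\,E(x,1)>E(x,0)\,]$. The point is that whenever $f'(x)\ne f(x)$, the evaluator ranks the wrong label at least as high as the right one while $D^\star$ strictly prefers the right one, so the $\ell_1$-error of $E$ from $D^\star$ on the two strings $(x,0)$ and $(x,1)$ is at least $D^\star(x,f(x))-D^\star(x,1\oplus f(x))=2^{-(n-1)}(1-2\eta)$. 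Summing over all $x$ with $f'(x)\ne f(x)$ and comparing to $\sum_{x,\ell}|D^\star(x,\ell)-E(x,\ell)|=2\|D^\star-E\|_{TV}$ gives $(1-2\eta)\Pr_{x\sim U}[f'(x)\ne f(x)]\le 2\delta_0+\sqrt{2\varepsilon'}$, and with the choices above this is at most $\varepsilon$. Thus $f'$ is an $\varepsilon$-accurate, polynomial-size hypothesis for $f$, produced with probability $1-\delta$ in time polynomial in $n$, $\tfrac1\varepsilon$, and $\tfrac1\delta$ (the dependence on $\tfrac1{1-2\eta}$ being constant), contradicting Conjecture~\ref{noisyparity}.

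I expect the main obstacle to be the construction of $G_S$ itself: realizing the noise injection within the prescribed universal gate set while using only $O(n)$ ``genuine'' states, verifying that the running-parity register interacts correctly with the sequence of binary measurements, and checking that the Solovay--Kitaev approximation error accumulates only additively over the $n$ steps. A secondary subtlety, handled above, is that small KL-divergence (equivalently, small total variation) controls $E$ only in aggregate and says nothing about any individual exponentially-unlikely string, so $S$ cannot simply be recovered by evaluating $E$ at basis vectors; the $(1-2\eta)$-gap averaging argument circumvents this by using closeness of $E$ to $D^\star$ only over the uniformly weighted bulk of inputs. Full details are deferred to Appendix~\ref{hardness-proof}.
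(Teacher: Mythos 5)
Your reduction half is fine, and in fact takes a different (equally valid) route from the paper: you go through Pinsker's inequality and a total-variation argument exploiting the $2^{-(n-1)}(1-2\eta)$ gap between the two labels of each $x$, with $\varepsilon'=\Theta((1-2\eta)^2\varepsilon^2)$, whereas the paper works directly with the KL-divergence, showing that any $x$ on which the evaluator ranks the wrong label at least as high contributes an excess of roughly $\frac{1}{2^n}(1-H(\eta))$ to $KL(P_S\|E)$, so accuracy $\varepsilon(1-H(\eta))$ suffices. Either decoding argument works; yours loses a quadratic in $\varepsilon$ but that is immaterial for a polynomial-time reduction.

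The genuine gap is in the construction of $G_S$, which is the heart of the paper's proof and which you largely defer. First, the Quantum Generator model places \emph{no gate-set restriction} on $U$: the definition allows an arbitrary unitary on $\bbC^k$, so the noise step can be realized exactly with amplitudes $\sqrt{1-\eta}$ and $i\sqrt{\eta}$ built directly into the single unitary, as the paper does; the Solovay--Kitaev detour is unnecessary, and your accounting for it is not coherent in this model. A Quantum Generator applies the \emph{same} unitary at every step and emits a measurement symbol after each application, so a $\poly\log\frac{1}{\delta_0}$-gate approximation of $R_\eta$ cannot be inserted ``for free'': gates are not states, and unrolling the circuit over extra steps would lengthen the output string and interleave extra measurement outcomes into the observed distribution, changing the very distribution you are reducing to (alternatively, composing the gates into one unitary makes the approximation pointless). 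Second, the part you flag as ``the main obstacle'' — that the running-parity register survives the per-step projective collapse, i.e., that after every measurement the state is again a single basis state, that each emitted bit is uniform, and that the measurement partition on the columns in $S$ is chosen so the outcome simultaneously reveals $x_t$ and updates the parity — is exactly what the paper must and does verify (its explicit $4(n+1)\times 4(n+1)$ matrix, the unitarity claim, and Claims~\ref{basis-states} and~\ref{parities}). Asserting a CNOT-style update ``conditioned on the bit just emitted'' and deferring the rest leaves unproved precisely the step where the classical HMM-style construction could fail quantumly; as it stands the proposal establishes the reduction only modulo that missing verification.
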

The proof is given in Appendix~\ref{hardness-proof}.

\section*{Acknowledgements}
The author would like to thank Seth Lloyd, Madhu Sudan, and Eran Tromer for
discussions that motivated the questions considered here, and Elad Verbin
for suggesting the relevance of learning noisy parities. The author also thanks 
Scott Aaronson for a smashing course on Quantum Complexity Theory, 
where this work was originally submitted as a course project.

\bibliographystyle{plain}

\bibliography{qct}

\appendix

\section{Proof of improper PAC-learnability}\label{PAC-proof}

For convenience, for a distribution $P$ on $\{0,1\}^n$ 
and sample $x\in\{0,1\}^n$, we define 
$P_i(x)=P(x_i|x_1,\ldots,x_{i-1})$. Thus, $P(x)=\prod_iP_i(x)$.

\begin{proof_of}{Claim~\ref{goodnet}}
Fix a measurement operator $M$ on a quantum system with $k$ basis states,
and consider the Quantum Generator with a unitary $U$ and starting state
$\ket{\psi_0}$. Consider the $\poly(k,\log\frac{1}{\epsilon_0})$-gate
approximation to $U$, $U'$, given by the Solovay-Kitaev Theorem, and
a $2k\log\frac{k}{\epsilon_0}$-bit approximation $\ket{\psi_0'}$ to 
$\ket{\psi_0}$ with representation $(b_1,\ldots,b_k)$ corresponding to
the normalization of the vector 
\[
\left(\left(1-\frac{\epsilon_0}{k}\right)^{b_1},\ldots,
\left(1-\frac{\epsilon_0}{k}\right)^{b_k}\right)
\]
noting that $\left(1-\frac{\epsilon_0}{k}\right)^{\frac{k}{\epsilon_0}
\log\frac{k}{\epsilon_0}}\leq\frac{\epsilon_0}{k}$. We therefore see that 
$\ket{\psi_0}$ has an approximation $\ket{\psi_0'}$ such that each entry is 
within a multiplicative $(1-\frac{\epsilon_0}{k})$-factor unless it is smaller
than $\frac{\epsilon_0}{k}$, so that in either case, the
$\ell_2$ distance between $\ket{\psi_0}$ and $\ket{\psi_0'}$
(recalling that $\ket{\psi_0}$ has $\ell_2$ norm 1) is at most $2\epsilon_0$.
Noting that at each step, the probability of $x_1,\ldots,x_i$ is equal
to the $\ell_2^2$ norm of $M_{x_i}U\cdots M_{x_1}U\ket{\psi}$,
it is easy to see that each application of $U'$ now grows the gap between
$P(x)$ and $P'(x)$ by at most $\epsilon_0$, so the total gap between
$P(x)$ and $P'(x)$ is at most $(n+2)\epsilon_0$.
Since $M$ has a $k$-bit representation and $U'$ has a
$\poly(n,k,\log\frac{1}{\epsilon})$-bit representation,
clearly the overall size of the $\epsilon$-net (taking $\epsilon_0=
\frac{\epsilon}{n+2}$) is $2^{\poly(n,k,\log\frac{1}{\epsilon})}$,
as claimed.
\end{proof_of}

For a fixed $\epsilon_1$, given a distribution $P$ and observation $x$, 
we define the perturbed distribution $\tP(x)$ 
(and associated ``corrected'' observation $\tx$)
as follows: if $P(x_1)<\epsilon_1$, then 
$\tP(x_1)=\epsilon_1$ and similarly, $\tP(x_1)=1-\epsilon_1$ whenever
$P(x_1)>1-\epsilon_1$; if $P(x_1)=0$, then
$\tx_1=\neg x_1$, otherwise, we put $\tx_1=x_1$.
If, on the other hand, $1-\epsilon_1\geq P(x_1)\geq \epsilon_1$, 
$\tP(x_1)=P(x_1)$.
Now, assuming that we have defined $\tx_1,\ldots,\tx_{i-1}$ and 
$\tP(x_1),\ldots,\tP(x_{i-1})$, we similarly define $\tx_i$ to be
$x_i$ if $P(x_i|\tx_1,\ldots,\tx_{i-1})\neq 0$ and $\neg x_i$
otherwise; finally, as before, we put $\tP_i(x)$ equal to
$P(x_i|\tx_1,\ldots,\tx_{i-1})$ ``restricted'' to the range
$[\epsilon_1,1-\epsilon_1]$.

It is easy to see that $\tP$ is a probability distribution over $\{0,1\}^n$.
Moreover, suppose $P'$ is a distribution such that 
$|P'(x)-P(x)|<\epsilon_2$ for all $x$ 
(e.g., as obtained via Claim~\ref{goodnet}).
We then have that $\tP'\geq\epsilon_1^n$ and 
$\tP'_i(x)<P'_i(x)$ only when $\tP'_i(x)=1-\epsilon_1$, and thus
\begin{align*}
KL(P||\tP')&=\sum_xP(x)\sum_i\log\frac{P_i(x)}{\tP'_i(x)}\\
&\leq\sum_{x:P(x)>\epsilon_1^n}P(x)\left[\sum_i\log\frac{P_i(x)}{P'_i(x)}+
\sum_{i:1-\epsilon_1\leq P'_i(x)}\log\frac{1}{1-\epsilon_1}\right]\\
&\leq\sum_{x:P(x)>\epsilon_1^n}P(x)\log\frac{P(x)}{P(x)-\epsilon_2}+
n\log\frac{1}{1-\epsilon_1}\\
&\leq\log\left(1+\frac{\epsilon_2}{\epsilon_1^n-\epsilon_2}\right)
+n\log\left(1+\frac{\epsilon_1}{1-\epsilon_1}\right)\\
&\leq \frac{\epsilon_2}{\epsilon_1^n-\epsilon_2}+n\epsilon_1
\end{align*}
so if we take $\epsilon_1^n-\epsilon_2=\sqrt{\epsilon_2}$, $KL(P||\tP')\leq
\sqrt{\epsilon_2}+n\epsilon_2^{1/2n}(1+\sqrt{\epsilon_2})^{1/n}$. Thus,
for a desired $\varepsilon_0$, taking $\epsilon_2=(\varepsilon_0/2(n+1))^{2n}$
suffices to give $KL(P||\tP')<\varepsilon_0$. Moreover, the size of the
$\epsilon_2$-net is still $2^{\poly(n,k,\log\frac{1}{\varepsilon_0})}$ (with a
larger dependence on $n$) and since $\tP'>\epsilon_1^n$, for every distibution 
$Q$ over $\{0,1\}^n$, we find
\[
KL(Q||\tP')=\sum_xQ(x)\log\frac{1}{\tP'}-H(Q)\leq
\sum_xQ(x)\log\frac{1}{\epsilon_1^n}=n\log\frac{1}{\epsilon_1}\leq n\log\frac
{2(n+1)}{\varepsilon_0}
\]

We now recall the following standard lemma used by Abe and Warmuth~\cite{aw92},
following from Hoeffding's inequality. (They reference Pollard~\cite{pollard}.)
\begin{lemma}Let $\calF$ be a finite set of random variables with range bounded
by $[0,M]$. Let $D$ be an arbitrary distribution. Then, if
\[
m\geq\frac{M^2}{\varepsilon^2}(\ln|\calF|+\ln\frac{1}{\delta})
\]
we have
\[
\Pr_{x_1,\ldots,x_m\in D}\left[\exists f\in\calF:\left|\frac{1}{m}\sum_if(x_i)-
\bE_D[f]\right|>\varepsilon\right]<\delta
\]
\end{lemma}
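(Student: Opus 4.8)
The statement is the standard uniform-convergence bound: for a finite family $\calF$ of $[0,M]$-valued random variables, $m\geq(M^2/\varepsilon^2)(\ln|\calF|+\ln\frac{1}{\delta})$ samples suffice so that, with probability $\geq 1-\delta$, every $f\in\calF$ has empirical mean within $\varepsilon$ of its true mean. The plan is the textbook two-step argument: (i) Hoeffding's inequality to control a single fixed $f$, then (ii) a union bound over the $|\calF|$ functions.

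First I would fix a single $f\in\calF$. The samples $x_1,\ldots,x_m$ are i.i.d.\ from $D$, so $f(x_1),\ldots,f(x_m)$ are i.i.d.\ random variables taking values in $[0,M]$, with common mean $\bE_D[f]$. Hoeffding's inequality then gives
\[
\Pr\left[\left|\frac{1}{m}\sum_i f(x_i)-\bE_D[f]\right|>\varepsilon\right]
\leq 2\exp\!\left(-\frac{2m\varepsilon^2}{M^2}\right).
\]
(If one prefers the one-sided-then-doubled form, or wants to avoid the factor of $2$, one can instead invoke a variant; the clean bound above is all that is needed.) Substituting the hypothesized lower bound $m\geq\frac{M^2}{\varepsilon^2}(\ln|\calF|+\ln\frac{1}{\delta})$ makes the exponent at least $2(\ln|\calF|+\ln\frac1\delta)$, so the right-hand side is at most $2|\calF|^{-2}\delta^{2}\leq\delta/|\calF|$ (using $|\calF|\geq1$ and $\delta\leq1$; if one wants to be careful about the leading $2$, note $2\delta^2/|\calF|^2\le\delta/|\calF|$ whenever $\delta\le 1/2$, and for $\delta$ close to $1$ the conclusion is trivially weak — or simply absorb the constant, as Abe–Warmuth do).

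Next I would union-bound over all $f\in\calF$:
\[
\Pr\left[\exists f\in\calF:\left|\frac{1}{m}\sum_i f(x_i)-\bE_D[f]\right|>\varepsilon\right]
\leq \sum_{f\in\calF}\frac{\delta}{|\calF|}=\delta,
\]
which is exactly the claimed conclusion. This completes the argument.

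\textbf{Main obstacle.} There is no real obstacle — the proof is entirely routine once Hoeffding's inequality is invoked. The only point requiring a modicum of care is bookkeeping the constant factor of $2$ from the two-sided Hoeffding bound against the $\ln\frac1\delta$ term; this is handled either by a marginally sharper single-sample tail bound or, as is standard in this literature (and as the cited source Pollard does), by tolerating constants in the definition of ``$m$ is large enough.'' Accordingly I would simply state that the chosen $m$ suffices and refer to Pollard~\cite{pollard} for the precise constants, exactly as the paper already does.
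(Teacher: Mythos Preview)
Your proposal is correct and matches the paper's treatment exactly: the paper does not give its own proof but simply records the lemma as a standard consequence of Hoeffding's inequality (citing Abe--Warmuth and Pollard), which is precisely the Hoeffding-plus-union-bound argument you outline.
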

Naturally, if $\calP$ is the set of perturbed distributinos from our $\epsilon_2
$-net, we apply this lemma with $\calF=\{\log\frac{1}{\tP'}:\tP'\in\calP\}$.
Thus, $\ln|\calF|=\poly(n,k,\log\frac{1}{\varepsilon_0})$ and 
$M=n\log\frac{2(n+1)}{\varepsilon_0}$. We also use $\varepsilon_0$ as 
$\varepsilon$, for convenience.

For the corresponding polynomial number of samples we find, following Abe and 
Warmuth, that for the true distribution $P$, its perturbed estimate $\tP'$, and
any perturbed distribution $P^*$ acheiving the minimum value of $\frac{1}{m}
\sum_i\log\frac{1}{P^*(x_i)}$, with probability $1-\delta$, the following
simultaneously hold:
\begin{align*}
\bE_P[\log\frac{1}{P^*}]-\frac{1}{m}\sum_i\log\frac{1}{P^*(x_i)} 
&< \varepsilon_0\\
\frac{1}{m}\sum_i\log\frac{1}{\tP'(x_i)}-\bE_P[\log\frac{1}{\tP'}] 
&< \varepsilon_0\\
\frac{1}{m}\sum_i\log\frac{1}{P^*(x_i)}-\frac{1}{m}\sum_i\log\frac{1}{\tP'(x_i)}
&\leq 0
\end{align*}
by summing the three, we find
\[
\bE_P[\log\frac{1}{P^*}]-\bE_P[\log\frac{1}{\tP'}]<2\varepsilon_0
\]
so therefore $KL(P||P^*)-KL(P||\tP')<2\varepsilon_0$.
Since we argued above that $KL(P||\tP')<\varepsilon_0$, we find that
$KL(P||P^*)<3\varepsilon_0$, so by taking $\varepsilon_0$ sufficiently small,
we see that it is sufficient to output a circuit corresponding to this
$P^*$. Since evaluating $P^*$ from its gate construction merely involves
performing a polynomial number of matrix operations to polynomial precision, 
Theorem~\ref{improper-pac} follows.

\section{Proof of computational hardness}\label{hardness-proof}
Let any parity function $f_S$ and any noise rate $\eta\in (0,1/2)$ be given.
Following the constructions of Kearns et al.~\cite{kmrrss94} and Mossel and
Roch~\cite{mr06}, we describe a $4(n+1)$-state Quantum generator for which the 
$(n+1)$-symbol output distribution is precisely the noisy parity 
distribution---$(x,f_S(x)\oplus b)$ where $x\in\{0,1\}^n$ is uniformly chosen
and $b\in\{0,1\}$ has $b=1$ with probability $\eta$.

\paragraph{Construction:} For convenience, we will index the basis states by 
$(j,k,\ell)\in\{0,1,\ldots,n\}\times\{0,1\}\times\{0,1\}$, where (cf. 
Figure~\ref{fig-qg-parity}) we think of $j$ as representing a column, $k=1$ as
representing the ``top half,'' and $\ell=1$ as representing the ``upper state.''
We will explicitly describe the entries of the matrix representation of the
Quantum Generator's unitary. (Verifying next that the matrix actually describes
a unitary transformation, of course!)

For each column $(j,k,\ell)$, there are exactly two nonzero entries, each in 
rows of the form $(j+1\bmod{n+1},k',\ell')$. For $j=0,\ldots,n-1$, if $(j+1)
\notin S$, then the nonzero entries are $1/\sqrt{2}$ in $(j+1,k,\ell)$ and 
$i/\sqrt{2}$ in $(j+1,k,\ell\oplus 1)$; if $(j+1)=\min(S)$, then the nonzero 
entries are $1/\sqrt{2}$ in $(j+1,k,\ell)$ and $i/\sqrt{2}$ in $(j+1,k\oplus 1,
\ell)$; and, if $(j+1)\in S$ but it is not the minimum element, then the entries
are $1/\sqrt{2}$ in $(j+1,k\oplus\ell,k)$ and $i/\sqrt{2}$ in $(j+1,1\oplus k
\oplus\ell,k)$. Finally, if $j=n$, then the nonzero entries are $\sqrt{1-\eta}$
in $(0,k,\ell)$ and $i\sqrt{\eta}$ in $(0,k\oplus 1,\ell)$. We further observe
that each row also has exactly two nonzero entries, one in column $(j,k,\ell)$
with zero complex part and one in column $(j,k',\ell')$ with zero real part; 
moreover, these two columns appear together in the support of another row, with 
column $(j,k,\ell)$ having zero real part and $(j,k',\ell')$ having zero
complex part.

\begin{claim} The linear transformation corresponding to this matrix is
unitary.
\end{claim}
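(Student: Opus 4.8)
The plan is to verify unitarity by checking directly that the columns of the described matrix $U$ form an orthonormal set; since $U$ is a finite square matrix, $U^\dagger U = I$ is equivalent to unitarity. Write $c_{(j,k,\ell)}$ for the column indexed by $(j,k,\ell)\in\{0,\ldots,n\}\times\{0,1\}\times\{0,1\}$. By the construction, each such column has exactly two nonzero entries, and their squared magnitudes are $\tfrac12$ and $\tfrac12$ when $j<n$, and $1-\eta$ and $\eta$ when $j=n$; in either case these sum to $1$, so every column is a unit vector.

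For orthogonality, first note that the two nonzero entries of $c_{(j,k,\ell)}$ lie in rows whose first coordinate is $(j+1)\bmod(n+1)$. Hence if $j\neq j'$ then $c_{(j,k,\ell)}$ and $c_{(j',k',\ell')}$ have disjoint supports and are automatically orthogonal, and it remains only to check, for each fixed $j$, that the four columns $c_{(j,0,0)}$, $c_{(j,0,1)}$, $c_{(j,1,0)}$, $c_{(j,1,1)}$ are pairwise orthogonal. Here I would invoke the structural observation already recorded in the construction: within this group of four columns the support pattern splits into two pairs, a pair of columns $c,c'$ together with a disjoint pair of rows $r,r'$, such that $c$ has a purely real entry in $r$ and a purely imaginary entry in $r'$ while $c'$ has a purely imaginary entry in $r$ and a purely real entry in $r'$, and moreover the two real entries of the block are equal (to $a$) and the two imaginary entries are equal (to $ib$), with $a^2+b^2=1$ — namely $a=b=1/\sqrt2$ for $j<n$ and $a=\sqrt{1-\eta}$, $b=\sqrt\eta$ for $j=n$. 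Thus the relevant $4\times4$ submatrix is, after reordering rows and columns, block diagonal with two identical-shaped $2\times2$ blocks $\begin{pmatrix} a & ib \\ ib & a\end{pmatrix}$, and such a block is unitary since its columns have squared norm $a^2+b^2=1$ and inner product $\overline{a}(ib)+\overline{(ib)}\,a = iab - iab = 0$. Hence all columns of $U$ are orthonormal, so $U$ is unitary.

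Carrying this out reduces to a finite case analysis over the four types of transition — $(j{+}1)\notin S$; $(j{+}1)=\min S$; $(j{+}1)\in S$ non-minimal; and $j=n$ — in each case reading off from the stated entries which two columns pair with which two rows and confirming that the two pairs of rows are disjoint and that the block has the claimed shape. I expect the only mildly delicate case to be $(j{+}1)\in S$ non-minimal, where the target row indices carry the $k\oplus\ell$ twist, so one must track carefully that $c_{(j,0,0)}$ pairs with $c_{(j,0,1)}$ on rows $(j{+}1,0,0),(j{+}1,1,0)$ while $c_{(j,1,0)}$ pairs with $c_{(j,1,1)}$ on the complementary rows $(j{+}1,0,1),(j{+}1,1,1)$; the remaining cases fall out immediately from the same bookkeeping.
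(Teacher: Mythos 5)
Your proposal is correct, but it takes a genuinely different route from the paper's proof. You verify $U^\dagger U=I$ directly by checking that the columns are orthonormal: unit norm is immediate since each column's two nonzero entries have squared magnitudes summing to $1$; columns with distinct $j$ have disjoint supports; and within a fixed $j$ the four columns split into two pairs, each pair forming (after reordering rows and columns) a $2\times 2$ block $\left(\begin{smallmatrix} a & ib\\ ib & a\end{smallmatrix}\right)$ with $a^2+b^2=1$, whose columns are orthogonal because $\bar{a}(ib)+\overline{(ib)}\,a=0$. The paper instead argues that the map is an isometry: it takes an arbitrary vector, splits each entry into real and imaginary parts $u+iv$, computes the output entry in each row, and shows that when one sums the contributions of the two rows sharing the same pair of input columns, the cross terms cancel and the $\ell_2$ weight carried by entries with index $j$ reappears exactly in the entries with index $j+1\pmod{n+1}$ (with the $j=n\to 0$ step handled separately using the weights $\sqrt{1-\eta}$ and $\sqrt{\eta}$). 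Both arguments rest on the same structural observation about paired rows and columns recorded in the construction, and both use that the matrix is square, so that either orthonormal columns or norm preservation suffices for unitarity; your block-diagonal reorganization makes the cancellation structurally transparent and avoids the explicit real/imaginary expansion, whereas the paper's computation is essentially the same cancellation written out longhand. Your bookkeeping in the one case you flagged as delicate, $(j+1)\in S$ non-minimal, is also right: column $(j,0,0)$ pairs with $(j,0,1)$ on rows $(j+1,0,0),(j+1,1,0)$ and $(j,1,0)$ pairs with $(j,1,1)$ on rows $(j+1,0,1),(j+1,1,1)$, exactly as the stated entries dictate, so the deferred case analysis does go through.
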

\begin{proof}
To see that this matrix is unitary, it suffices to show that the $\ell_2$ weight
from entries with index $j$ is preserved in the entries with index $j+1\pmod
{n+1}$ after application of the corresponding transformation. Let any vector in 
$\bbC^{4(n+1)}$ be given; we decompose its entries into real and complex part, 
$u(j,k,\ell)+iv(j,k,\ell)$. For $j\neq 0$, suppose that the two nonzero entries 
in row $(j,k,\ell)$ are columns $(j-1,k',\ell')$ and $(j-1,k'',\ell'')$, where
the former has weight with zero complex part, and the latter has zero real part.
Then, the output entry $(j,k,\ell)$ is
\[
\frac{1}{\sqrt{2}}(u(j-1,k',\ell')-v(j-1,k'',\ell''))
+\frac{i}{\sqrt{2}}(u(j-1,k'',\ell'')+v(j-1,k',\ell'))
\]
so its contribution to the $\ell_2$ weight is
\[
\frac{1}{2}((u(j-1,k',\ell')-v(j-1,k'',\ell''))^2+(u(j-1,k'',\ell'')+v(j-1,k',\ell'))^2)
\]
where, in the other row with columns $(j-1,k',\ell')$ and $(j-1,k'',\ell'')$
in its support, the contribution to the $\ell_2$ weight is
\[
\frac{1}{2}((u(j-1,k'',\ell'')-v(j-1,k',\ell'))^2+(u(j-1,k',\ell')+v(j-1,k'',\ell''))^2)
\]
and therefore, summing over these rows gives that the entries with index $j-1$ 
yield $\ell_2$ weight
\[
\sum_{k,\ell}(u(j-1,k,\ell)^2+v(j-1,k,\ell)^2)
\]
in entries with index $j$ (again, for $j\neq 0$) of the output. We also 
similarly find, for $j=0$, that the output entry $(0,k,\ell)$ is
\[
(\sqrt{1-\eta}u(n,k,\ell)-\sqrt{\eta} v(n,k\oplus 1,\ell))+
i(\sqrt{\eta} u(n,k\oplus 1,\ell)+\sqrt{1-\eta}v(n,k,\ell))
\]
so its contribution to the $\ell_2$ weight is
\begin{align*}
&(1-\eta)u(n,k,\ell)^2-2\sqrt{\eta(1-\eta)}u(n,k,\ell)v(n,k\oplus 1,\ell)
+\eta v(n,k\oplus 1,\ell)^2\\
&\qquad\qquad+\eta u(n,k\oplus 1,\ell)^2+
2\sqrt{\eta(1-\eta)}u(n,k\oplus 1,\ell)v(n,k,\ell)
+(1-\eta) v(n,k,\ell)^2
\end{align*}
where, summing over $(0,0,\ell)$ and $(0,1,\ell)$ gives
\[
u(n,0,\ell)^2+v(n,0,\ell)^2+u(n,1,\ell)^2+v(n,1,\ell)^2
\]
and hence, summing over all $(j,k,\ell)$ in the output, we observe that the
$\ell_2$ norm is indeed preserved, so the linear transformation is unitary.
\end{proof}

\paragraph{Choice of measurement and start state: }We let the Quantum 
Generator's measurement operator be as follows: for $j\notin\{0\}\cup S$, the 
basis states of the form $(j,k,b)$ are in the basis of the subspace
corresponding to the outcome $b$; for $j\in S-\{\min(S)\}$, the basis states
satisfying $(j,\ell\oplus b,\ell)$ are in the basis corresponding to the 
outcome $b$; and otherwise, the basis state $(j,b,\ell)$ is in the basis of the 
subspace corresponding to the outcome $b$. We take our start state to be the
basis state $(0,0,0)$. By the previous claim, this is a $4(n+1)$-state Quantum 
Generator, as promised.

\paragraph{Correctness: }
We are now in a position to verify that the $(n+1)$-symbol output distribution
of the constructed Quantum Generator is the distribution of noisy random
labeled examples of $f_S$.

\begin{claim}\label{basis-states}
Each $\ket{\psi_t}$ is of the form $\rho e_{(j,k,\ell)}$ where
$e_{(j,k,\ell)}$ is a vector corresponding to the basis state $(j,k,\ell)$ and
$\rho\in\bbC$ satisfies $|\rho|=1$.
\end{claim}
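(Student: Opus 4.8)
The plan is to prove this by induction on $t$, tracking the two basic facts about the construction: that $U$ sends every basis state to a superposition of exactly two basis states, and that the measurement operator always assigns those two basis states opposite outcomes. The base case is immediate: $\ket{\psi_0}=e_{(0,0,0)}$ by the choice of start state, which has the required form with $\rho=1$.

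For the inductive step, assume $\ket{\psi_t}=\rho e_{(j,k,\ell)}$ with $|\rho|=1$. By the explicit description of $U$, the column indexed $(j,k,\ell)$ has exactly two nonzero entries, both in rows of column $j+1\bmod(n+1)$, one purely real and one purely imaginary; hence $U\ket{\psi_t}=\alpha e_{s_0}+\beta e_{s_1}$ for two distinct basis states $s_0,s_1$ in column $j+1\bmod(n+1)$, with $\{\alpha,\beta\}=\{\rho/\sqrt2,\,i\rho/\sqrt2\}$ when $j<n$ and $\{\alpha,\beta\}=\{\rho\sqrt{1-\eta},\,i\rho\sqrt\eta\}$ when $j=n$. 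The heart of the argument is then to check, case by case against both the definition of $U$ and the definition of $M$, that $s_0$ and $s_1$ lie in different measurement subspaces. If $j+1\notin\{0\}\cup S$ the targets are $(j+1,k,\ell)$ and $(j+1,k,\ell\oplus1)$, assigned outcomes $\ell$ and $\ell\oplus1$; if $j+1=\min(S)$ the targets are $(j+1,k,\ell)$ and $(j+1,k\oplus1,\ell)$, assigned outcomes $k$ and $k\oplus1$; if $j+1\in S-\{\min(S)\}$ the targets are $(j+1,k\oplus\ell,k)$ and $(j+1,1\oplus k\oplus\ell,k)$, which under the rule assigning outcome $b$ to $(j+1,\ell'\oplus b,\ell')$ receive outcomes $(k\oplus\ell)\oplus k=\ell$ and $(1\oplus k\oplus\ell)\oplus k=\ell\oplus1$; and for the wrap-around step $j=n$ the targets are $(0,k,\ell)$ and $(0,k\oplus1,\ell)$, assigned outcomes $k$ and $k\oplus1$. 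In every case the two outcomes differ.

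Consequently each projection $M_b U\ket{\psi_t}$ is either $0$ or a scalar multiple of a single basis vector $e_s$, and when the outcome $b$ with $M_bU\ket{\psi_t}\neq0$ is observed, renormalization divides out exactly the modulus of that scalar, so $\ket{\psi_{t+1}}=M_bU\ket{\psi_t}/\|M_bU\ket{\psi_t}\|_2$ equals $e_s$ times a complex number of modulus $1$ (indeed $\rho$ or $i\rho$). This is again of the required form, completing the induction. I expect the only place an error could hide to be the bookkeeping in that case analysis — in particular matching the third transition rule, which permutes the $k$ and $\ell$ coordinates, against the measurement rule for interior elements of $S$; everything else follows immediately from the observation that $U$ produces two-term superpositions and that the measurement separates the two terms.
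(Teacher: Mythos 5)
Your proof is correct and follows the same route as the paper: induction on $t$, using the fact that each column of $U$ has exactly two nonzero entries whose target basis states lie in different measurement subspaces, so the projection and renormalization yield a single basis state up to a unit-modulus phase. The paper's own proof leaves the case analysis as ``inspection''; your explicit check of all four transition rules (including the interior-of-$S$ rule against the $(j,\ell\oplus b,\ell)$ measurement convention) is accurate and simply fills in that detail.
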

\begin{proof}
This claim is easy to verify by induction on $t$: assuming it is true of
$\ket{\psi_t}$, we see by inspection that the two entries in the support of
column $(j,k,\ell)$ in our matrix correspond to different measurement outcomes,
so the projection selects exactly one of them for $\ket{\psi_{t+1}}$.
\end{proof}

\begin{claim}\label{parities}
For $t=j\pmod{n+1}$, such that $j\geq\min(S)$, the Quantum Generator is in a 
basis state $(j,k,\ell)$ where $k=\bigoplus_{t':t'>t-j,t'\pmod{n+1}\in S}
x_{t'}$.
\end{claim}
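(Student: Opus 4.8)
The plan is to prove the claim by induction on $t$, using Claim~\ref{basis-states} so that at every step $\ket{\psi_t}$ is (up to a unit phase) a single basis vector $e_{(j,k,\ell)}$, together with the transition rules and the measurement rule at column $j$. The claim only asserts something about the first coordinate $k$, so only that coordinate needs to be tracked. Writing $t = m(n+1)+j$ with $0\le j\le n$ and unpacking the index set in the statement, the asserted value of $k$ is the running parity $\bigoplus_{p\in S:\,p\le j} x_{m(n+1)+p}$ of the symbols read so far during the $m$-th pass through the automaton, over the positions that lie in $S$. I will treat the steps with $j=\min(S)$ as base cases (they re-establish the invariant from scratch, regardless of history) and the steps with $\min(S)<j\le n$ as the inductive step, where I use the claim at time $t-1$, whose column $j-1$ still satisfies $j-1\ge\min(S)$.

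For the base case, observe that for every $m$ the state at time $m(n+1)+\min(S)$ comes from the state at column $\min(S)-1\in\{0,\dots,n-1\}$ via the transition rule for the case $(j{+}1)=\min(S)$, which sends $(\min(S)-1,k',\ell')$ to $(\min(S),k',\ell')$ or $(\min(S),k'\oplus1,\ell')$; hence the post-transition state is $(\min(S),a,\ell')$ with $a\in\{k',k'\oplus1\}$. The measurement at column $\min(S)$ puts $(\min(S),b,\ell')$ in the subspace of outcome $b$, so the symbol observed at this step equals $a$, i.e.\ the new first coordinate equals $x_{m(n+1)+\min(S)}$ --- exactly the claimed running parity, since $\min(S)$ is the only element of $S$ in $\{1,\dots,\min(S)\}$. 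The conclusion does not involve $k'$ or $\ell'$, so the invariant holds at the start of every pass without any hypothesis on what preceded it. For the inductive step with $j\notin S$, the transition rule for $(j{+}1)\notin S$ leaves the first coordinate fixed and only possibly flips the second, while the measurement at a column outside $\{0\}\cup S$ returns the second coordinate; so the first coordinate is unchanged, which matches the running parity since $j\notin S$.

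The remaining, and genuinely delicate, case is $\min(S)<j\le n$ with $j\in S$. Here the transition rule for $(j{+}1)\in S\setminus\{\min(S)\}$ sends $(j-1,k_{\mathrm{old}},\ell_{\mathrm{old}})$ to $(j,k_{\mathrm{old}}\oplus\ell_{\mathrm{old}},k_{\mathrm{old}})$ or to $(j,1\oplus k_{\mathrm{old}}\oplus\ell_{\mathrm{old}},k_{\mathrm{old}})$, so the post-transition state is $(j,\alpha,k_{\mathrm{old}})$ with $\alpha\oplus k_{\mathrm{old}}\in\{\ell_{\mathrm{old}},1\oplus\ell_{\mathrm{old}}\}$. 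The measurement at a column in $S\setminus\{\min(S)\}$ puts $(j,\ell\oplus b,\ell)$ in the subspace of outcome $b$, i.e.\ returns the XOR of the two coordinates, namely $\alpha\oplus k_{\mathrm{old}}$; so the observed symbol is $x_t=\alpha\oplus k_{\mathrm{old}}$ and hence the new first coordinate is $\alpha=x_t\oplus k_{\mathrm{old}}$. By the induction hypothesis $k_{\mathrm{old}}$ is the running parity over the $S$-positions $\le j-1$, which is the same as over the $S$-positions $<j$; XORing in $x_t$, the bit read at position $j\in S$, yields exactly the running parity over the $S$-positions $\le j$, as required. The one point that needs care throughout is that which of the two transition branches was taken is not known in advance, but the measured symbol is precisely the bit that disambiguates it; in each case one must check that ``apply the transition, then the measurement'' leaves the first coordinate equal to the intended running parity, and this is most delicate in the $j\in S$ case because there the transition permutes the two coordinates (the old $k$ becomes the new $\ell$), and it is the match between that permutation and the ``$\ell\oplus b$'' form of the measurement basis at $S$-columns that makes the accounting close. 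No estimates or analytic input are needed --- the whole argument is a bookkeeping induction.
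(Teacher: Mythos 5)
Your proof is correct and follows essentially the same route as the paper's: an induction on the column index within each pass, with the base case at $\min(S)$ and a case analysis on whether $j\in S$, using Claim~\ref{basis-states} together with the transition and measurement rules to track the first coordinate. Your writeup is somewhat more explicit than the paper's about how the measurement bases at $S$-columns disambiguate the two transition branches, but the underlying argument is the same.
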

\begin{proof}
Note first that if $t=\min(S)\pmod{n+1}$, then $\bigoplus_{t':t'>t-j,t'\pmod{n+1
}\in S}x_{t'}=x_t$. Thus, since by Claim~\ref{basis-states} $\ket{\psi_{t-1}}$ 
was a basis state, by construction we obtain $x_t=1$ if $\ket{\psi_t}$ is
supported by $(t\bmod{n+1},1,\ell)$ and $x_t=0$ when $\ket{\psi_t}$ is supported
by $(t\bmod{n+1},0,\ell)$. Suppose then for induction that this holds up to
$t\pmod{n+1}-1>\min(S)$. Then, if $t\pmod{n+1}\in S$, we see that by
construction, if $k=b_0$ and $x_t=b$, $\ket{\psi_t}$ is supported by the basis 
state $(t\bmod{n+1},b_0\oplus b,b_0)$, as needed. Otherwise,
$\ket{\psi_t}$ is supported by the basis state $(t\bmod{n+1},b_0,b)$ so in any 
case, the claim holds.
\end{proof}

We now observe that for $t\neq 0\pmod{n+1}$, by Claim~\ref{basis-states} and
further inspection, $\ket{\psi_t}$ is supported by a basis state $(t\bmod{n+1},
k,\ell)$ in the support of the measurement outcome 0 with probability $1/2$,
and is similarly in the support of the measurement outcome 1 with probability
$1/2$, so each such $x_t$ is uniformly distributed on $\{0,1\}$. Moreover, by
Claim~\ref{parities}, for $t=n\pmod{n+1}$, $\ket{\psi_t}$ is supported on a
basis state of the form $(n,b,\ell)$ for $b=\bigoplus_{t'\in\{t,t-1,\ldots,
t-n+1\}:t'\pmod{n+1}\in S}x_{t'}$. Thus, by construction, $\ket{\psi_{t+1}}$ is
supported by a basis state of the form $(0,b,\ell)$ with probability $1-\eta$ 
and of the form $(0,b\oplus 1,\ell)$ with probability $\eta$; since these
correspond to measurements $x_t=b$ with probability $1-\eta$ and $x_t=b\oplus 1$
with probability $\eta$, we see that every $(n+1)$ symbols of the output of this
Quantum Generator are distributed precisely according to the distribution of 
independent random labeled examples of $f_S$ with noise rate $\eta$, as desired.

\paragraph{Hardness of learning a parity distribution: }
Suppose that we could efficiently learn the output distribution of this Quantum
Generator. In particular, for any desired $\varepsilon$ we can therefore 
efficiently learn a circuit $E$ such that $KL(P_S||E)\leq\varepsilon
(1-H(\eta))$, where $H$ is the binary entropy function. For this circuit $E$,
observe that if $E(x,f_S(x))\leq E(x,\neg f_S(x))$, then it is easy to verify by
elementary calculus (the minimum is achieved at $E(x,f_S(x))=E(x,\neg f_S(x))$)
that $x$ contributes
\[
\frac{1}{2^n}\left(
\eta \log\frac{1}{E(x,\neg f_S(x))}+
(1-\eta)\log\frac{1}{E(x,f_S(x))}
\right)\geq \frac{1}{2^n}(1+\log\frac{1}{E(x)})
\]
to $KL(P_S||E)$. On the rest of the distribution, $E$ certainly encodes $P_S$
no better than the optimal encoding for $P_S$, so we find that if more than a 
$\varepsilon$ fraction of $x$ satisfy $E(x,f_S(x))\leq E(x,\neg f_S(x))$, then
\[
KL(P_S||E)> \varepsilon(1+n)+(1-\varepsilon)(H(\eta)+n)-(H(\eta)+n)
=\varepsilon(1-H(\eta))
\]
contradicting our assumption about the KL-divergence of $E$ from $P_S$.
Therefore we find that, for a uniformly chosen $x\in\{0,1\}^n$, the circuit $E'$
that outputs $b$ iff $E(x,b)>E(x,\neg b)$ correctly predicts $f_S(x)$ with
probability at least $1-\varepsilon$. This simple modification of $E$ can be
output efficiently, contradicting the assumed hardness of learning noisy
parities.
\end{document}